\DeclareMathOperator{\RT}{RT}
\DeclareMathOperator{\ce}{ce}
\newtheorem{theorem}{Theorem}
\newtheorem{lemma}[theorem]{Lemma}
\newtheorem{observation}[theorem]{Observation}
\begin{document}

\title{The Lexicographically Least Binary Rich Word Achieving the Repetition Threshold}

\author{James Currie}\thanks{The work of James Currie is supported by the Natural Sciences and Engineering Research Council of Canada (NSERC), [funding reference number 2017-03901].}
\author{Narad Rampersad}\thanks{The work of Narad Rampersad is supported by the Natural Sciences and Engineering Research Council of Canada (NSERC), [funding reference number 2019-04111].}
\address{Department of Mathematics and Statistics \\
University of Winnipeg \\
515 Portage Avenue \\
Winnipeg, Manitoba R3B 2E9 (Canada)}
\email{\{j.currie,n.rampersad\}@uwinnipeg.ca}


\date{\today}

\maketitle

\section{Introduction}
A major branch of combinatorics on words studies words avoiding various powers or patterns. A typical question is whether there exists an infinite word over a certain alphabet avoiding a certain pattern. The earliest known result of this type is by Thue \cite{thue06}, who proved that there is an infinite word over a three-letter alphabet containing no factor of the form $h(xx)$ with $h$ a non-erasing morphism. 

We use $\Sigma_n$ to denote the $n$-letter alphabet $\Sigma_n=\{0,1,2,\ldots,n-1\}$. Let $p$ be an arbitrary finite string. Several generalizations of Thue's result have been explored.
\begin{enumerate}
\item  Does there exist some $n$ such that there is an infinite word over $\Sigma_n$ containing no factor of the form $h(p)$ with $h$ a non-erasing morphism?
\item For a fixed $n$, is there an infinite word over $\Sigma_n$ containing no factor of the form $h(p)$ with $h$ a non-erasing morphism?
\end{enumerate}
The first of these problems was shown to be decidable by Bean et.\ al. \cite{bean79} and independently by Zimin \cite{zimin80}. It is unknown whether the second problem is decidable.

A word of length $\ell$ and period $p$ is called a $k$-power, where $k=\ell/p$. A reformulation of Thue's result is that there is an infinite word over $\Sigma_3$ not containing a $2$-power. For integer $n\ge 2$, the {\em repetitive threshold} function is defined by 
$$\RT(n)=\sup\{k: \text{ every infinite word over $\Sigma_n$ contains a $k$-power}\}.$$

Thus Thue showed that $\RT(3)\le 2$. Dejean \cite{dejean72} showed that in fact $\RT(3)=7/4$, and conjectured that
\begin{displaymath}
\RT(n) = \begin{cases}
	  7/4,                 & \text{ if $n=3$;}\\
	  7/5,                 & \text{ if $n=4$;}\\
	  n/(n-1),                 & \text{ if $n\ne 3,4$.}\\
\end{cases}
\end{displaymath}
Dejean's conjecture was finally proved by Rao, and independently by Currie and Rampersad \cite{rao11,currie11}.
Words over an alphabet which realize the repetition threshold of the alphabet are called {\bf threshold words} and are extremal objects. In the case $n=2$, the threhold words are the binary overlap-free words, which have a large literature. (A good reference is the thesis of Rampersad \cite{rampersad07}.) With the solution of Dejean's conjecture, an indexed family of similar languages present themselves for study. As an example of such study, for threshold words on $\Sigma_n$ with $n\ge 27$, Currie et.\ al.\cite{currie20exp} have shown that the number of words grows exponentially with length.

Also branching off from the solution of Dejean's conjecture is the study of repetition thresholds for various classes of words. For example, various authors have found the repetition thresholds for binary rich words, for balanced sequences, and for circular words \cite{currie20,dolce23,gorbunova12}. Other types of repetition thresholds have also been studied, such as undirected repetition thresholds and Abelian repetition thresholds \cite{currie21,samsonov12}.

When investigating the existence of an infinite word over $\Sigma_n$ with some property, a natural approach is to generate and study long finite words with the property. Such words are typically generated by backtracking, and are therefore the lexicographically least words of a given length. Practically speaking then, solving avoidance problems often involves generating and parsing prefixes of the lexicographically least infinite word with a given property. Allouche et.\ al.\cite{allouche98} characterized the lexicographically least infinite overlap-free binary word starting with any specified prefix. Currie \cite{currie23} characterized the lexicographically least infinite {\em good} word, where the good words are closely related to the period-doubling morphism. However, the general study of lexicographically least infinite words with avoidance properties is in its infancy, and more examples are needed.

The current note combines the theme of repetition threshold with that of lexicographically least words. The 2020 paper of Currie et.\ al.\cite{currie20} established the repetition threshold for binary rich words. Studying such words by backtracking leads naturally to the question: What is the lexicographically least infinite binary rich word? We answer this question in this note.

\section{Preliminaries}

A {\em word} over {\em alphabet} $\Sigma_n$ is a finite or infinite sequence over $\Sigma_n$. We use lower case letters for finite words, and write, e.\ g., word $w=w_1w_2\cdots w_m$, where each $w_i\in \Sigma_n$. The {\em length} of $w$ is denoted by $|w|=m$. The word of length $0$ is called the {\em empty word}, and is denoted by $\epsilon$.  The {\em concatenation} of two words $u=u_1u_2\cdots u_s$ and $v=v_1v_2\cdots v_t$ is given by $uv=u_1u_2\cdots u_s v_1v_2\cdots v_t$. If $u, v, w, z$ are words and $w=uzv$, we call word $z$ a {\em factor} of $w$, word $u$ a {\em prefix} of $w$, and word $v$ a {\em suffix} of $w$. If $w=uv$, we define $u^{-1}w=v$.

A {\em morphism} from $\Sigma_n^*$ to $\Sigma_m^*$ is a function $f$ respecting concatenation; i.e., $f(xy)=f(x)f(y)$ for all $x,y\in \Sigma_n^*$. If $f^{-1}(\epsilon)=\{\epsilon\}$, we call $f$ {\em non-erasing}. 

We use bold-face letters for infinite words, writing ${\boldsymbol w}=w_1w_2w_3\cdots$, where each $w_i\in \Sigma_n$. The set of finite words over $\Sigma_n$ is denoted by $\Sigma_n^*$, and the set of infinite words is denoted by $\Sigma_n^\omega$.

Iteration of a morphism $f$ is written as exponentiation: 
\begin{displaymath}
f^i(x) = \begin{cases}
	  x,                 & \text{ if $i=0$;}\\
	  f(f^{i-1}(x)), & \text{ if $i>0$.}
\end{cases}
\end{displaymath}

If $f:\Sigma_n^*\rightarrow \Sigma_n^*$ is a morphism such that for some $a\in \Sigma_n$, $f(a)=au$,$u\ne \epsilon$, then $f^{n-1}(a)$ is a proper prefix of $f^n(a)$ for every positive integer $n$. We can then define ${\boldsymbol w}=\lim_{n\rightarrow\infty}f^n(a)$ to be the infinite word such that, for each $n$, word $f^n(a)$ is a prefix of ${\boldsymbol w}$.

Let $w$ be a finite word over $\Sigma_n$. Write  $w=w_1w_2\cdots w_m$ where each $w_i\in\Sigma_n$. The {\em  reversal} of $w$ is the word $w^R=w_mw_{m-1}\cdots w_1$. We call word $w$ a {\em palindrome} if $w=w^R$. Any word $w$ contains at most $|w|$ distinct palindromic factors. If $w$ in fact contains $|w|$ distinct palindromic factors, we say that $w$ is {\em rich}. A good reference on rich words is the paper of Glen et.\ al.\cite{glen09}. One of their results which we will use is

\begin{theorem}\cite[Theorem 2.14]{glen09}  For any finite or infinite word w, the following properties are equivalent:
\begin{itemize}
\item[i] $w$ is rich;
\item[ii] for any factor $u$ of $w$, if $u$ contains exactly two occurrences of a palindrome $p$ as a prefix and as a suffix
only, then $u$ is itself a palindrome.
\end{itemize}
\end{theorem}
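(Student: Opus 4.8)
The plan is to reduce both implications to a single \emph{engine}: the Droubay--Justin--Pirillo observation that appending one letter to a word creates at most one new palindromic factor. Concretely, I would first prove that for any word $v$ and letter $a$, the only palindromic suffix of $va$ that can fail to occur in $v$ is the \emph{longest} palindromic suffix $s$ of $va$; indeed, any shorter palindromic suffix $t$ of $va$ is a suffix of $s$, hence (being a palindrome) also a prefix of $s$, and this prefix occurrence of $t$ ends strictly before the final letter of $va$, so $t$ already occurs in $v$. Summing this bound over the prefixes of $w$ shows $w$ has at most $|w|$ nonempty palindromic factors, with equality exactly when, for every nonempty prefix, the longest palindromic suffix is \emph{unioccurrent} (occurs only once). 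This is the criterion I will use. I would also record the easy consequence that richness is inherited by factors: prefix-closure is immediate from the counting, and closure under suffixes follows by reversal, so every factor of a rich word is rich and satisfies the same unioccurrence criterion.

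For the direction (i) $\Rightarrow$ (ii), let $u$ be the factor described in (ii): a complete return to a palindrome $p$, i.e.\ $u$ begins and ends with $p$ and contains no other occurrence of $p$. By heredity $u$ is rich, so its longest palindromic suffix $s$ is unioccurrent in $u$. Since $p$ is a palindromic suffix of $u$ we have $|s|\ge|p|$ and $p$ is a suffix of $s$; as $s$ is a palindrome, $p$ is then also a prefix of $s$, producing an occurrence of $p$ starting at position $|u|-|s|+1$ of $u$. Because the only occurrences of $p$ in $u$ are the prefix and the suffix one, this forces either $|s|=|u|$ (so $s=u$ and $u$ is a palindrome, as desired) or $|s|=|p|$ (so $s=p$). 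The latter is impossible, since $p$ occurs twice in $u$ while the longest palindromic suffix is unioccurrent. Hence $u=s$ is a palindrome.

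For (ii) $\Rightarrow$ (i), I argue by contradiction against the unioccurrence criterion. Suppose some nonempty prefix $v$ of $w$ has longest palindromic suffix $s$ occurring at least twice in $v$. Taking the two rightmost occurrences of $s$ and cutting $v$ from the start of the penultimate occurrence to the end, I obtain a factor $u'$ that begins and ends with $s$ and contains exactly two occurrences of $s$, namely a complete return to $s$. Hypothesis (ii) makes $u'$ a palindrome, and $|u'|>|s|$ since the two occurrences are distinct. But $u'$ is a suffix of $v$, so $u'$ is a palindromic suffix of $v$ strictly longer than $s$, contradicting the maximality of $s$. Therefore every nonempty prefix has a unioccurrent longest palindromic suffix, and $w$ is rich. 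The infinite case presents no extra difficulty: richness and condition (ii) are both statements about the finite factors and prefixes, so applying the finite arguments to every prefix of the infinite word suffices.

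The main obstacle is setting up the engine correctly---the per-letter lemma and the precise equivalence between the palindrome count and the unioccurrence criterion---since everything afterward is short. Within the equivalence itself, the delicate point is the case analysis in (i) $\Rightarrow$ (ii): one must use the defining property of a complete return (that $p$ appears \emph{only} at the two ends) to pin the prefix occurrence of $p$ inside $s$ to one of exactly two positions. Once that is in hand, both implications are essentially bookkeeping with lengths and occurrence positions.
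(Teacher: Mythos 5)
This theorem is quoted in the paper from Glen, Justin, Widmer, and Zamboni \cite[Theorem 2.14]{glen09} and is not proved there, so there is no in-paper argument to compare against; judged on its own, your proof is correct and follows the same standard route used in the cited source, namely the Droubay--Justin--Pirillo observation that each appended letter creates at most one new palindromic factor (the longest palindromic suffix, and exactly one iff that suffix is unioccurrent), together with heredity of richness under taking factors. Both implications are carried out correctly: the position bookkeeping pinning the prefix occurrence of $p$ inside $s$ to one of the two allowed positions, and the extraction of a complete return from two rightmost occurrences of $s$, are exactly the right steps.
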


A factor $u$ of $w$ containing exactly two occurrences of a factor $p$ as a prefix and as a suffix is called a {\em return word} of $p$.
An infinite word is defined to be rich if each of its finite factors is rich.

Let $w$ be a finite or infinite word. The {\em critical exponent} of $w$ is defined to be
$$\ce(w)=\sup\{k: \text{ $w$ contains a $k$-power}\}.$$
Let $L$ be a set of infinite words. The repetitive threshold of $L$ is defined to be
$$\RT(L)=\sup\{k: \text{ every word of $L$ contains a $k$-power}\}=\inf\{\ce(w):w\in L\}.$$
Thus $\RT(\Sigma_n^*)=\RT(\Sigma_n^\omega)=\RT(n)$.

Baranwal and Shallit~\cite{baranwal19} showed that there is an infinite binary rich word with critical exponent $2+\sqrt{2}/2$, and Currie et.\ al.\cite{currie20} proved that this word achieves the repetition threshold for infinite binary rich words.
Thus, if $L$ is the set of binary rich words, 
$\RT(L)=2+\sqrt{2}/2$.
Let ${\boldsymbol L}$ be the set of infinite binary rich words. The set ${\boldsymbol T}$ of {\em threshold words} is the set of infinite binary rich words whose critical exponent is the repetition threshold. Thus
$${\boldsymbol T}=\{{\boldsymbol w}\in{\boldsymbol L}:\ce({\boldsymbol w})=2+\sqrt{2}/2\}.$$

Define morphisms $f:\Sigma_3^*\rightarrow\Sigma_2^*$ and $g,h:\Sigma_3^*\rightarrow\Sigma_3^*$ by
\begin{align*}
    f(\tt{0})&=\tt{0}\\
    f(\tt{1})&=\tt{01}\\
    f(\tt{2})&=\tt{011}\\[6pt]
    g(\tt{0})&=\tt{011}\\
    g(\tt{1})&=\tt{0121}\\
    g(\tt{2})&=\tt{012121}\\[6pt]
    h(\tt{0})&=\tt{01}\\
    h(\tt{1})&=\tt{02}\\
    h(\tt{2})&=\tt{022}
\end{align*}

Word $f(h^\omega(\tt{0}))$ is the word constructed by Baranwal and Shallit~\cite{baranwal19}.  
The word $f(g(h^\omega(\tt{0})))$ was shown to be a binary rich word with the same critical exponent by Currie et.\ al.\ \cite{currie20}.

The lexicographic order on $\Sigma_n^*$ and $\Sigma_n^\omega$ is defined as follows: \begin{itemize}
\item We order letters in the natural way: $0<1<2<\cdots<n-1.$ We also insist that $\epsilon<0$. 
\item Let the longest common prefix of $u$ and $v$ be $p$. We say that $u<v$ if and only if the first letter of $p^{-1}u$ is less than the first letter of $p^{-1}v$, where the first letter of $\epsilon$ is taken to be $\epsilon$.
\end{itemize}

One checks that morphisms $f$, $g$, and $h$ are order-preserving: Let $\phi\in\{f,g,h\}$. If $u\le v$ then $\phi(u)\le\phi(v)$.

\begin{theorem}[Main Theorem]
Word $\ell=f(01g(h^\omega(\tt{0})))$ is the lexicographically least word in ${\boldsymbol T}$.
\end{theorem}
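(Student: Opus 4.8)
The proof naturally splits into two parts: showing that $\ell\in{\boldsymbol T}$, and showing that no word of ${\boldsymbol T}$ is lexicographically smaller. For membership, note that since $f$ is a morphism, $\ell=f(0)f(1)f(g(h^\omega(0)))=001\,{\boldsymbol c}$, where ${\boldsymbol c}=f(g(h^\omega(0)))$ is the binary rich word of Currie et al.\ with $\ce({\boldsymbol c})=2+\sqrt2/2$. Thus $\ell$ differs from ${\boldsymbol c}$ only by the prepended factor $001$ and shares the infinite tail ${\boldsymbol c}$. I would first verify that $\ell$ is rich, either by applying the return-word criterion stated above to the finitely many factors that meet the prefix $001$, or by exhibiting $\ell$ as a factor of a known two-sided rich word. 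I would then verify $\ce(\ell)=2+\sqrt2/2$: every factor of $\ell$ beginning at position $4$ or later is a factor of ${\boldsymbol c}$, so any repetition that could raise the critical exponent above that of ${\boldsymbol c}$ must begin inside $001$. Such a repetition either has very small period or else its long periodic tail already occurs in ${\boldsymbol c}$, so this reduces to a bounded, finite check.

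The substance of the theorem is minimality. Since the alphabet is binary, it suffices to prove that for every position $i$ with $\ell_i=1$, the prefix $\ell[1..i-1]\,0$ cannot be extended to an infinite word of ${\boldsymbol T}$; then any ${\boldsymbol w}\in{\boldsymbol T}$ agreeing with $\ell$ through position $i-1$ must carry $\ell_i=1$ at position $i$, and a first-difference argument gives $\ell\le{\boldsymbol w}$. The tool for handling these infinitely many positions is desubstitution. Every word of ${\boldsymbol T}$ avoids $000$ and $111$ (each is a cube, exceeding the threshold), and every binary word beginning with $0$ and avoiding $111$ has a unique preimage under $f$; moreover $f$, $g$, $h$ are order-preserving. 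I would prove a structural lemma asserting that, up to a bounded prefix, each word of ${\boldsymbol T}$ is the $f$-image of a word over $\Sigma_3$ that avoids $00$ and satisfies further constraints inherited from richness and the critical exponent, and that iterating this desubstitution drives the $\Sigma_3$-word toward the rigid $g$- and $h$-block structure culminating in $h^\omega(0)$.

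With such a lemma in hand, order-preservation of $f$, $g$, $h$ lets me transfer lexicographic minimality from ${\boldsymbol T}$ down to the least admissible word over $\Sigma_3$, which is forced to be $h^\omega(0)$; re-applying $g$ and $f$ reconstructs the canonical tail ${\boldsymbol c}$. A separate, finite analysis of the initial segment then identifies the cheapest legal head. Here the extra letters $01$ in $f(01\,g(h^\omega(0)))$ play their role: the factor $01$ is \emph{not} the start of any $g$-block (every $g$-block opens with $011$ or $012$), so it encodes a one-time deviation that produces the cheaper binary head $001=f(01)$ while keeping $\ell$ rich and below the threshold, after which the lexicographically least legal continuation is exactly ${\boldsymbol c}$. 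Combining the optimal head with the forced tail yields precisely $\ell$.

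I expect the main obstacle to be the structural/desubstitution lemma, specifically proving that a deviation $\ell[1..i-1]\,0$ at a $1$-position of $\ell$ is genuinely fatal, i.e.\ that it admits no continuation as an infinite binary rich word of critical exponent $2+\sqrt2/2$. Ruling this out locally, via an immediate forbidden cube or a non-palindromic return word, disposes of many positions; the real danger is a deviation that remains rich and low-power for a long stretch before failing. Controlling this requires showing that every admissible continuation resynchronizes with the $f$- and $f\circ g$-block structure within a bounded window, so that the obstruction always surfaces after boundedly many letters. Establishing this bounded-window resynchronization—equivalently, a sufficiently complete description of the factor structure of ${\boldsymbol T}$—is the crux, and is the step most likely to rest on a careful, possibly machine-assisted, case analysis.
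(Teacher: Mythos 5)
Your proposal follows essentially the same route as the paper: membership is established by writing $\ell=001\,f(g(h^\omega(0)))$ and checking only the finitely many new factors created by the prepended prefix, and minimality is established by desubstituting through the order-preserving morphisms $f$, $g$, $h$ combined with a finite analysis of the initial segment (the paper's version: the least $3$-free word $00100100$ of length $8$ is not extendable, forcing the prefix $001001010$). The structural ``resynchronization'' lemmas you correctly identify as the crux are exactly what the paper imports as its Lemmas~2 and~3 from the earlier Currie--Mol--Rampersad paper on the repetition threshold for binary rich words.
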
  

\section{Proof of Main Theorem}

We say that a word $w\in\Sigma_2^*$ is {\bf good} if it is both rich and $14/5$-free. We say that an infinite word over $\Sigma_2$ is good if its factors are good.

\begin{observation}\label{f}
Since an infinite good word ${\boldsymbol w}$ must be $3$-free, it can be written as $pf({\boldsymbol u})$, where $p\in\{\epsilon,1,11\}$, and ${\boldsymbol u}\in\Sigma_3^\omega$.
\end{observation}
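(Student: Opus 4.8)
The plan is to exploit the rigid block structure of $f$. Since $14/5<3$, any cube is in particular a $14/5$-power, so a good word contains no cube; in particular $\boldsymbol{w}$ has no factor $\tt{111}$. The decisive feature of $f$ is that each image $f(\tt{0})=\tt{0}$, $f(\tt{1})=\tt{01}$, $f(\tt{2})=\tt{011}$ begins with $\tt{0}$ and contains no further $\tt{0}$. Consequently, in any word of the form $f(\boldsymbol{u})$ the occurrences of $\tt{0}$ are precisely the left endpoints of the blocks $f(u_1),f(u_2),\dots$, and conversely any binary word that begins with $\tt{0}$ and has no run of three $\tt{1}$'s can be read off, block by block, as an image under $f$.

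First I would peel off the initial run of $\tt{1}$'s. Let $p$ be the longest prefix of $\boldsymbol{w}$ lying in $\tt{1}^*$. Because $\boldsymbol{w}$ avoids $\tt{111}$, we have $|p|\le 2$, so $p\in\{\epsilon,\tt{1},\tt{11}\}$. Writing $\boldsymbol{w}'=p^{-1}\boldsymbol{w}$, the word $\boldsymbol{w}'$ is infinite (as $\boldsymbol{w}$ is) and, by maximality of $p$, begins with $\tt{0}$.

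Next I would decode $\boldsymbol{w}'$. Cutting $\boldsymbol{w}'$ immediately before each occurrence of $\tt{0}$ partitions it into consecutive blocks, each consisting of a single $\tt{0}$ followed by the maximal run of $\tt{1}$'s that succeeds it. Since $\boldsymbol{w}'$ contains no $\tt{111}$, every such run has length $0$, $1$, or $2$, so each block equals one of $\tt{0}=f(\tt{0})$, $\tt{01}=f(\tt{1})$, or $\tt{011}=f(\tt{2})$. As the blocks are of bounded length and $\boldsymbol{w}'$ is infinite, there are infinitely many of them; replacing each block by the corresponding letter of $\Sigma_3$ yields a word $\boldsymbol{u}\in\Sigma_3^\omega$ with $f(\boldsymbol{u})=\boldsymbol{w}'$, and therefore $\boldsymbol{w}=p\,f(\boldsymbol{u})$, as claimed. (The factorisation is unique because every block begins with the only $\tt{0}$ it contains, so the $\tt{0}$-positions force the block boundaries; hence $\boldsymbol{u}$ is determined by $\boldsymbol{w}$.)

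The only point requiring genuine care is that this greedy block-parsing is both well defined and exhaustive: one must check that after reading a block the scan lands on a $\tt{0}$ (or the word has ended) and that no run of $\tt{1}$'s ever reaches length $3$. Both hinge on cube-freeness, specifically the absence of $\tt{111}$; once that is in hand the remainder is purely a matter of bookkeeping with the three short code words $\tt{0}$, $\tt{01}$, $\tt{011}$.
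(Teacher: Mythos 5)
Your argument is correct and is exactly the (unwritten) justification behind the paper's Observation~\ref{f}: $14/5$-freeness rules out cubes, hence the factor $\tt{111}$, and the block code $\tt{0},\tt{01},\tt{011}$ then parses the word uniquely after stripping an initial run of at most two $\tt{1}$'s. The only cosmetic slip is the phrase ``any cube is a $14/5$-power''---strictly, a cube \emph{contains} a $14/5$-power as a prefix---but this does not affect the argument.
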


We use slight amplifications of the results of Currie et.\ al.\ \cite{currie20}:

\begin{lemma}~\label{outer} Suppose $f({\boldsymbol u})$ is good, where ${\boldsymbol u}\in011\Sigma_3^\omega$. Then \begin{enumerate}
\item Word ${\boldsymbol u}=g({\boldsymbol W})$ for some word ${\boldsymbol W}\in\Sigma_3^\omega$.
\item Word ${\boldsymbol W}$ has the form $h({\boldsymbol U})$ for some word ${\boldsymbol U}\in\Sigma_3^\omega$.
\end{enumerate}
\end{lemma}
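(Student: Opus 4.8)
The plan is to establish both parts by desubstitution, exploiting that $g$ and $h$ are \emph{$0$-synchronizing}: every image $g(0)=011$, $g(1)=0121$, $g(2)=012121$ and every image $h(0)=01$, $h(1)=02$, $h(2)=022$ begins with $0$ and contains no other occurrence of $0$. Hence in any word of $g(\Sigma_3^\omega)$ (resp.\ $h(\Sigma_3^\omega)$) the occurrences of $0$ mark exactly the boundaries between consecutive images, and the word is determined by its maximal blocks of letters from $\{1,2\}$. Concretely, such a word lies in $g(\Sigma_3^\omega)$ precisely when it begins with $0$ and each of these blocks lies in $\{11,121,12121\}$, and in $h(\Sigma_3^\omega)$ precisely when it begins with $0$ and each block lies in $\{1,2,22\}$. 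Since ${\boldsymbol u}\in011\Sigma_3^\omega$, it begins with $0$ and its first block is $11$; so for part~(1) it remains only to classify the maximal $\{1,2\}$-blocks of ${\boldsymbol u}$.

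For part~(1) I would first record that ${\boldsymbol u}$ is cube-free: since $f$ is non-erasing, a cube $xxx$ in ${\boldsymbol u}$ would give the cube $f(x)^3$ in $f({\boldsymbol u})$, contradicting $14/5$-freeness. This already bounds every run of a single letter by $2$; and since $f((12)^k)=(01011)^k$, with the analogous identity for $(21)^k$, it forbids arbitrarily long alternating blocks, so every block is finite and short. The remaining task is a finite case analysis eliminating every maximal block outside $\{11,121,12121\}$ — for instance those containing $22$, those beginning or ending with $2$, or the lone letter $1$. For each such candidate I would examine its image under $f$ together with a bounded amount of surrounding context (partly fixed by the block structure already known) and derive a contradiction: either a factor of exponent at least $14/5$, or, via the return-word characterization of richness stated above, a nonpalindromic return word to some palindrome.

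Part~(2) is analogous, but the constraint is now on how the blocks classified in part~(1) may be adjacent. Writing ${\boldsymbol u}=g({\boldsymbol W})$, the letters of ${\boldsymbol W}$ are read off the block types ($11\mapsto0$, $121\mapsto1$, $12121\mapsto2$), and ${\boldsymbol W}$ begins with $0$. By the synchronization principle applied to $h$, it suffices to show that every maximal $\{1,2\}$-block of ${\boldsymbol W}$ lies in $\{1,2,22\}$. The same-letter violations are again powers: a block $11$ in ${\boldsymbol W}$ would force ${\boldsymbol W}$ to contain $0110$, so ${\boldsymbol u}$ would contain $g(0110)=011\,0121\,0121\,011$, whose image under $f$ contains the cube $(10100101)^3$; three consecutive $12121$-blocks (a block $222$ in ${\boldsymbol W}$) likewise force a cube. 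The mixed runs — a $121$-block adjacent to a $12121$-block, and so on — are the residual cases, which I expect to require the richness characterization together with $14/5$-freeness.

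The main obstacle is the richness bookkeeping in these two case analyses. Power-avoidance localizes cleanly and disposes of the length, periodicity, and same-letter constraints, but several of the surviving forbidden patterns are not high powers and can be excluded only through their palindromic structure; indeed, short factors such as $f(0\,22\,0)$ are themselves rich and of low exponent, so the contradiction only emerges once enough context is added. Applying the richness characterization therefore requires fixing enough neighbouring letters — sometimes several blocks to each side — for the return-word argument to be decisive. Organizing this finite but delicate verification, and confirming that the context used is genuinely forced by the block structure already established, is where essentially all of the effort lies.
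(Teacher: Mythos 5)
Your overall strategy is the right one, and it is essentially the strategy behind the result you are being asked to prove: note, though, that the paper itself does not prove this lemma at all --- it imports it from the proof of Lemma~9 of Currie, Mol and Rampersad \cite{currie20}, and what you have sketched is, in outline, the desubstitution argument carried out there. Your synchronization observation is correct (every $g$- and $h$-image begins with $0$ and contains no other $0$, so membership in $g(\Sigma_3^\omega)$ or $h(\Sigma_3^\omega)$ reduces to classifying the maximal $\{1,2\}$-blocks), the cube-freeness of ${\boldsymbol u}$ does follow from $14/5$-freeness of $f({\boldsymbol u})$ since $f$ maps cubes to cubes, and the explicit computations you do give check out (for instance $f(g(0110))$ really does contain the cube $(10100101)^3$).

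The genuine gap is that the decisive part of the argument is announced rather than executed. You yourself identify that power-avoidance alone does not eliminate all forbidden blocks --- e.g.\ a maximal block $22$ in ${\boldsymbol u}$ gives $f(0220)=0011011\cdots$, whose largest repetition is only a $7/3$-power, so it can only be excluded by the richness condition after enough forced context is assembled --- and then you defer exactly those cases (``I would examine\dots and derive a contradiction,'' ``which I expect to require the richness characterization''). The same is true of the mixed-block adjacencies needed for part~(2). Until each forbidden block type and each forbidden adjacency is actually run through the return-word criterion with its forced context, you have a proof plan, not a proof; this finite case analysis is precisely the content of Lemma~9 of \cite{currie20}. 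Two smaller points you should also make explicit: you must rule out an infinite suffix of ${\boldsymbol u}$ (and of ${\boldsymbol W}$) over $\{1,2\}$, so that every maximal block is genuinely finite and flanked by $0$s before you invoke the block classification (your alternation bound gestures at this but does not cover, say, an eventual suffix $2^\omega$ of ${\boldsymbol W}$ directly at the ${\boldsymbol W}$ level); and you should record why the first block of ${\boldsymbol W}$, not just its first letter, is of an allowed type, since the hypothesis only pins down the prefix $011$ of ${\boldsymbol u}$.
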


Lemma~\ref{outer} follows from the proof of Lemma~9 of Currie et.\ al.\ \cite{currie20}.

\begin{lemma}\label{inner}
Let ${\boldsymbol u}\in0\Sigma_3^\omega$. Suppose that for some positive integer $n$, one of $f(g(h^n({\boldsymbol u})))$ and $f(h^n({\boldsymbol u}))$ is good. Then ${\boldsymbol u}=h({\boldsymbol W})$ for some word ${\boldsymbol W}\in\Sigma_3^\omega$.
\end{lemma}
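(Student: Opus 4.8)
The plan is to prove the contrapositive, after recasting membership in $h(\Sigma_3^\omega)$ as a word-avoidance property. Reading $h$ as a block code with code words $\tt{01}$, $\tt{02}$, $\tt{022}$, a word ${\boldsymbol u}\in0\Sigma_3^\omega$ lies in $h(\Sigma_3^\omega)$ precisely when every maximal block of nonzero letters is one of $\tt1$, $\tt2$, $\tt{22}$; the minimal factors violating this are exactly $F=\{\tt{00},\tt{11},\tt{12},\tt{21},\tt{222}\}$. So it suffices to show that if ${\boldsymbol u}\in0\Sigma_3^\omega$ contains a factor $x\in F$, then for every $n\ge1$ neither $f(h^n({\boldsymbol u}))$ nor $f(g(h^n({\boldsymbol u})))$ is good, contradicting the hypothesis. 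Throughout I would use a prefix-monotonicity sub-lemma, proved by induction on $n$: $f(h^n(\tt{0}))\preceq f(h^n(\tt{1}))\preceq f(h^n(\tt{2}))$, and likewise for $f\circ g\circ h^n$, where $\preceq$ is the prefix order. The inductive step uses $h(\tt{0})=\tt{01}$, $h(\tt{1})=\tt{02}$, $h(\tt{2})=\tt{022}$ together with $f(\tt{0})\preceq f(\tt{1})\preceq f(\tt{2})$.

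The power cases come first and are handled uniformly in $n$. Since $\tt{222}$ is a cube and a morphism sends a cube to a cube, $f(h^n(\tt{222}))=(f(h^n(\tt{2})))^3$ and $f(g(h^n(\tt{222})))=(f(g(h^n(\tt{2})))) ^3$ are binary cubes, hence not $14/5$-free. The factor $\tt{00}$ is only a square, giving the image $(f(h^n(\tt{0})))^2$, but the sub-lemma guarantees that the image block following the two zeros begins with $f(h^n(\tt{0}))$, so the image in fact contains $(f(h^n(\tt{0})))^3$, again a cube; the same argument applies to $f\circ g\circ h^n$.

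The defect $\tt{11}$ also yields a power, but now a two-sided extension is needed. The square $(f(h^n(\tt{1})))^2$ extends rightward into the following block along the common prefix of length $|f(h^n(\tt{0}))|$, and leftward into the preceding block along the longest common suffix of $f(h^n(\tt{0}))$ and $f(h^n(\tt{1}))$. Writing $A_n=|f(h^n(\tt{0}))|$, $B_n=|f(h^n(\tt{1}))|$, $C_n=|f(h^n(\tt{2}))|$, these lengths obey the linear recurrence induced by $h$ (with dominant eigenvalue $1+\sqrt2$), and the common-suffix length obeys a related recurrence; one checks that the resulting exponent $2+(\text{left}+\text{right})/B_n$ stays above $14/5$ for every $n$, in fact near $3$. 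The verification reduces to confirming that a ratio of linear recurrences exceeds $4/5$, which is an eventually-periodic, hence finite, check.

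The genuinely delicate cases are the transitions $\tt{12}$ and $\tt{21}$, and these are the main obstacle. Here the natural period-$B_n$ repetition only reaches exponent about $2+\sqrt2/2$, strictly below $14/5$, so a power count does not suffice. For these I would instead invoke richness through the characterization of Glen et al.\ quoted above: I would exhibit, inside $f(h^n({\boldsymbol u}))$ (resp.\ $f(g(h^n({\boldsymbol u})))$), a complete return word of a palindrome that is not itself a palindrome. Making this uniform in $n$ is the crux; as with Lemma~\ref{outer}, I expect it to follow from the desubstitution analysis of Currie et al.~\cite{currie20}, of which the present statement is an amplification, by verifying the non-palindromic return word in the base cases $n=1,2$ and lifting via the self-similarity of $h$, each such defect reproducing a defect of the same type one level down, with the prefix- and suffix-compatibility lemmas controlling the block boundaries.
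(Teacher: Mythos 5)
Your reduction to the contrapositive is sound: a word in $0\Sigma_3^\omega$ lies in $h(\Sigma_3^\omega)$ exactly when it avoids $\{00,11,12,21,222\}$, and your prefix-order sub-lemma for $f\circ h^n$ and $f\circ g\circ h^n$ is correct and follows by the induction you indicate. The cases $222$, $00$ and $11$ do produce cubes or near-cubes as you claim (for $11$ the left and right extensions in fact sum to exactly one full period, so the exponent is at least $3$ for every $n$). One small repair is needed there: your assertion that the verification is ``eventually periodic, hence finite'' is not right as stated, since the ratios of the block lengths converge to an irrational limit; you would instead prove an exact identity for the common-suffix lengths by the same induction that gives the common prefixes.

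The genuine gap is exactly where you place it: the factors $12$ and $21$. You correctly observe that no power argument can work there, and you propose to exhibit a non-palindromic complete return word inside $f(h^n({\boldsymbol u}))$ and $f(g(h^n({\boldsymbol u})))$; but you never produce one, never verify a base case, and never prove the lifting step --- the paragraph ends with ``I expect it to follow from the desubstitution analysis of Currie et al.'' That is not a proof of the crux; it is a deferral to the same source the paper itself invokes (the paper derives the lemma from the proofs of Lemmas~10 and~11 of \cite{currie20} and gives no further argument), so your proposal does not constitute an independent proof. Moreover the difficulty is real: the local image of a forbidden transition can itself be good --- for instance $f(h(012))=00100110011011$ is rich (it has $14$ distinct nonempty palindromic factors) and contains no $14/5$-power --- so the violation must be extracted from a longer context, uniformly in $n$ and for both $f\circ h^n$ and $f\circ g\circ h^n$. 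Until you identify the offending palindrome and its non-palindromic complete return explicitly and show how the defect persists under iteration of $h$, the lemma is not established by your argument.
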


Lemma~\ref{inner} follows from the proofs of Lemmas~10 and 11 of Currie et.\ al.\ \cite{currie20}.

\begin{theorem}\label{good} The infinite binary word ${\boldsymbol v}=f(g(h^\omega(0)))$ is good.
\end{theorem}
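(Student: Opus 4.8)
The plan is to derive both defining properties of a \emph{good} word---richness and $14/5$-freeness---directly from the results already recorded in the Preliminaries, so that the only genuinely new content is a single numerical comparison together with the observation, built into our definition, that goodness of an infinite word is inherited factor-by-factor.

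First I would invoke the result of Currie et al.\ \cite{currie20} quoted above: the word ${\boldsymbol v}=f(g(h^\omega(\tt{0})))$ is a binary rich word. By the definition of richness for infinite words, this means that every finite factor of ${\boldsymbol v}$ is rich, which supplies the first half of \emph{good}. Next I would establish $14/5$-freeness. The same citation gives $\ce({\boldsymbol v})=2+\sqrt{2}/2$. Since the critical exponent is the supremum of the exponents of the powers occurring in ${\boldsymbol v}$, it suffices to note that $2+\sqrt{2}/2<14/5$; indeed $\sqrt{2}/2=0.7071\ldots<0.8=14/5-2$. Consequently no factor of ${\boldsymbol v}$ can be a $14/5$-power (nor a power of any larger exponent), so every factor of ${\boldsymbol v}$ is $14/5$-free. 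Combining the two parts, every finite factor of ${\boldsymbol v}$ is simultaneously rich and $14/5$-free, that is, good, and hence ${\boldsymbol v}$ itself is good.

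I do not anticipate a serious obstacle here, because the hard analytic work---verifying richness and pinning down the \emph{exact} critical exponent of the morphic word---has already been carried out in \cite{baranwal19,currie20} and may simply be quoted. The one point requiring care is conventional: one must confirm that ``$14/5$-free'' is taken to mean ``containing no factor of exponent $\ge 14/5$'', so that the strict inequality $\ce({\boldsymbol v})<14/5$ genuinely rules out every $14/5$-power; this is immediate from the supremum characterization of the critical exponent. Should a fully self-contained argument be preferred, the critical-exponent bound could instead be re-established by a direct analysis of the periods arising under the composition $f\circ g\circ h$, or mechanically via an automatic theorem prover in the style of Baranwal and Shallit \cite{baranwal19}; but for the purposes of this note the cited facts together with the numerical inequality are enough.
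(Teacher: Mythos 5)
Your proposal is correct and matches the paper's approach: the paper likewise disposes of Theorem~\ref{good} by citing Currie et al.\ \cite{currie20} (specifically their Theorems~15 and~17), which supply richness and the $14/5$-freeness of ${\boldsymbol v}$. Your added numerical check $2+\sqrt{2}/2<14/5$ is a harmless and valid way to pass from the known critical exponent to $14/5$-freeness.
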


Theorem~\ref{good} follows from Theorems~15 and 17 of Currie et.\ al.\ \cite{currie20}.

We begin with a preliminary lemma. 
\begin{lemma}\label{001010}
The lexicographically least infinite good word with prefix $001010$ is
$${\boldsymbol v}=f(g(h^\omega(0))).$$ 
\end{lemma}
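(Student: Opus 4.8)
The plan is to show that every infinite good word ${\boldsymbol w}$ with prefix $001010$ satisfies ${\boldsymbol w}\ge{\boldsymbol v}$. Since ${\boldsymbol v}=f(g(h^\omega(0)))$ is itself good by Theorem~\ref{good}, and a direct computation of the three morphic images shows that ${\boldsymbol v}$ begins with $001010$, this identifies ${\boldsymbol v}$ as the lexicographically least word of the stated form.

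First I would reduce ${\boldsymbol w}$ to the canonical form produced by the morphisms. By Observation~\ref{f} we may write ${\boldsymbol w}=pf({\boldsymbol u})$ with $p\in\{\epsilon,1,11\}$; since ${\boldsymbol w}$ begins with $0$ we must have $p=\epsilon$, so ${\boldsymbol w}=f({\boldsymbol u})$. Decoding the prefix $001010=f(011)$ shows ${\boldsymbol u}\in011\Sigma_3^\omega$, so Lemma~\ref{outer} applies and yields ${\boldsymbol u}=g(h({\boldsymbol U}))$ for some ${\boldsymbol U}\in\Sigma_3^\omega$; thus ${\boldsymbol w}=f(g(h({\boldsymbol U})))$. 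Because $f$, $g$, and $h$ are order-preserving and $h^\omega(0)=h(h^\omega(0))$, it suffices to prove ${\boldsymbol U}\ge h^\omega(0)$: this gives $h({\boldsymbol U})\ge h(h^\omega(0))=h^\omega(0)$ and hence ${\boldsymbol w}=f(g(h({\boldsymbol U})))\ge f(g(h^\omega(0)))={\boldsymbol v}$.

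To prove ${\boldsymbol U}\ge h^\omega(0)$ I would peel off copies of $h$ by repeatedly invoking Lemma~\ref{inner}. Set ${\boldsymbol U}_0={\boldsymbol U}$, so that ${\boldsymbol w}=f(g(h^{1}({\boldsymbol U}_0)))$. As long as ${\boldsymbol U}_k$ begins with $0$, the word ${\boldsymbol U}_k$ lies in $0\Sigma_3^\omega$ and $f(g(h^{k+1}({\boldsymbol U}_k)))={\boldsymbol w}$ is good, so Lemma~\ref{inner} (with $n=k+1$) gives ${\boldsymbol U}_k=h({\boldsymbol U}_{k+1})$ and hence ${\boldsymbol w}=f(g(h^{k+2}({\boldsymbol U}_{k+1})))$. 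Two cases then arise. If some ${\boldsymbol U}_k$ begins with $1$ or $2$, then ${\boldsymbol U}_k>h^\omega(0)$ because $h^\omega(0)$ begins with $0$; combined with ${\boldsymbol U}_0=h^k({\boldsymbol U}_k)$ and the order-preservation of $h^k$ this yields ${\boldsymbol U}_0\ge h^k(h^\omega(0))=h^\omega(0)$. Otherwise every ${\boldsymbol U}_k$ begins with $0$, so $h^k(0)$ is a prefix of ${\boldsymbol U}_0=h^k({\boldsymbol U}_k)$ for every $k$; since $h^k(0)$ is a prefix of $h^\omega(0)$ of length tending to infinity, we conclude ${\boldsymbol U}_0=h^\omega(0)$. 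In either case ${\boldsymbol U}\ge h^\omega(0)$, completing the argument.

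The routine parts are the prefix decodings and the verifications that $f$, $g$, and $h$ preserve order. The main obstacle is organizing the iteration cleanly: one must check that the hypotheses of Lemma~\ref{inner} are met at every stage (in particular that the relevant word still lies in $0\Sigma_3^\omega$ and that the good word is presented as $f(g(h^{n}(\cdot)))$ with the correct exponent $n$), and then handle the limiting case, where all ${\boldsymbol U}_k$ begin with $0$, through the convergence argument showing that the growing common prefixes $h^k(0)$ force ${\boldsymbol U}={\boldsymbol U}_0$ to equal the fixed point $h^\omega(0)$.
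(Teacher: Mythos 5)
Your proposal is correct and takes essentially the same route as the paper's proof: reduce via Observation~\ref{f} and Lemma~\ref{outer}, then iterate Lemma~\ref{inner} to peel off copies of $h$ and identify the fixed point $h^\omega(0)$. The only (harmless) difference is one of bookkeeping: the paper assumes ${\boldsymbol V}\le{\boldsymbol v}$ from the outset, which together with order-preservation forces every ${\boldsymbol W}_n$ to begin with $0$ and so eliminates your Case 1, whereas you prove the unconditional inequality ${\boldsymbol w}\ge{\boldsymbol v}$ by splitting into the two cases.
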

\begin{proof}
Suppose that ${\boldsymbol V}$ is an infinite good word with prefix $001010$, and ${\boldsymbol V}\le{\boldsymbol v}$. By Observation~\ref{f}, write ${\boldsymbol V}=f({\boldsymbol u})$, where ${\boldsymbol u}\in\Sigma_3^\omega$. Since ${\boldsymbol V}$ has prefix $001010$, word ${\boldsymbol u}$ has prefix $011$. It follows from Lemma~\ref{outer} that 
${\boldsymbol u}=g({\boldsymbol U})$ for some word ${\boldsymbol U}\in\Sigma_3^\omega$, where ${\boldsymbol U}$ has the form $h({\boldsymbol W}_1)$ for some word ${\boldsymbol W}_1\in\Sigma_3^\omega$.

Since $f$ is order-preserving, ${\boldsymbol u}\le g(h^\omega(0))$. Since $g$ is order-preserving, ${\boldsymbol U}\le h^\omega(0)$. Since $h$ is order-preserving, ${\boldsymbol W}_1\le h^\omega(0)$. In particular, since the first letter of $h^\omega(0)$ is $0$, the first letter of ${\boldsymbol W}_1$ is $0$. Using Lemma~\ref{inner}, write ${\boldsymbol W}_1=h({\boldsymbol W}_2)$ for some word ${\boldsymbol W}_2\in\Sigma_3^\omega$. Again, since $h$ is order-preserving, the first letter of ${\boldsymbol W}_2$ is $0$. By induction, we find that for each positive integer $n$ we have ${\boldsymbol W}_1=h^{n-1}({\boldsymbol W}_n)$, for some word ${\boldsymbol W}_n\in0\Sigma_3^\omega$. It follows that $h^n(0)$ is a prefix of ${\boldsymbol W}_1$ for each $n$, so that ${\boldsymbol W}_1=h^\omega(0)$.

We conclude that the lexicographically least infinite good word with prefix $001010$ is
$${\boldsymbol v}=f(g(h^\omega(0))).$$ 
\end{proof}
The Main Theorem follows from the following three lemmas.
\begin{lemma}\label{lexleast} Let ${\boldsymbol m}$ be an infinite good word. Let 
$${\boldsymbol \ell}=f(01g(h^\omega(0))).$$ Then 
$${\boldsymbol \ell}\le {\boldsymbol m}.$$
\end{lemma}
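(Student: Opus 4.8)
The plan is to run a greedy, first-difference comparison of an arbitrary good word ${\boldsymbol m}$ against ${\boldsymbol \ell}$, forcing the two to agree on a prefix long enough that Lemma~\ref{001010} can be invoked on a suffix. The organizing observation is the factorization
$${\boldsymbol \ell}=f(01\,g(h^\omega(0)))=f(01)\,f(g(h^\omega(0)))=001\,{\boldsymbol v},$$
using $f(01)=f(0)f(1)=0\cdot 01=001$. Since Lemma~\ref{001010} tells us that ${\boldsymbol v}$ begins with $001010$, the word ${\boldsymbol \ell}$ begins with $001001010$. This prefix, together with the identity ${\boldsymbol \ell}=001\,{\boldsymbol v}$, is what I would exploit at the end.

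Next I would compare ${\boldsymbol m}$ with ${\boldsymbol \ell}$ one letter at a time through position $9$. At each position either ${\boldsymbol m}$ matches ${\boldsymbol \ell}$ (and we continue), or the first disagreement makes ${\boldsymbol m}$ \emph{larger} (for instance ${\boldsymbol m}$ fails to begin with $0$, or begins with $01$, $0011$, $00101$, $0010011$, or has a $1$ at position $9$), in which case ${\boldsymbol \ell}<{\boldsymbol m}$ and we are done. The only remaining branches are those in which ${\boldsymbol m}$ would dip strictly \emph{below} ${\boldsymbol \ell}$, which can happen only at a position where ${\boldsymbol \ell}$ carries a $1$, namely positions $3,6,8$ of $001001010$. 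A $0$ at position $3$ or $6$ forces the factor $000$ (inside $000$, respectively $001000$), a $3$-power forbidden by $14/5$-freeness, so these branches are impossible. The delicate branch, and the main obstacle, is a $0$ at position $8$, giving the prefix $00100100$: this word is itself rich and $14/5$-free, so it cannot be dismissed as a forbidden factor. I would instead argue that it is a \emph{dead end}: its only two one-letter extensions are $001001000$, which contains the cube $000$, and $001001001=(001)^3$, which is a cube; both violate $14/5$-freeness, so $00100100$ is not a prefix of any infinite good word. Hence any good ${\boldsymbol m}$ that is not already larger than ${\boldsymbol \ell}$ must agree with ${\boldsymbol \ell}$ through position $9$, i.e.\ begin with $001001010$.

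Finally, assume ${\boldsymbol m}$ begins with $001001010$, and set ${\boldsymbol m}'=(001)^{-1}{\boldsymbol m}$. As a suffix of the good word ${\boldsymbol m}$, every factor of ${\boldsymbol m}'$ is a factor of ${\boldsymbol m}$ and hence good, so ${\boldsymbol m}'$ is a good word, and it begins with $001010$. Lemma~\ref{001010} then gives ${\boldsymbol v}\le {\boldsymbol m}'$, and since prepending the common prefix $001$ preserves lexicographic order, ${\boldsymbol \ell}=001\,{\boldsymbol v}\le 001\,{\boldsymbol m}'={\boldsymbol m}$, as required. I expect the only nonroutine point to be the dead-end analysis of $00100100$, precisely because it is a perfectly good finite word that fails only upon attempting to extend it; the rest is a short finite case check followed by the appeal to Lemma~\ref{001010}.
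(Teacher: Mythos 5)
Your proposal is correct and follows essentially the same route as the paper: both establish that $001001010$ is forced as the lexicographically least possible length-$9$ prefix of a good word (via the same dead-end analysis of $00100100$, whose two extensions $001001000$ and $(001)^3$ each contain a cube), and both then strip the prefix $001$ and invoke Lemma~\ref{001010}. Your letter-by-letter case analysis is just a more explicit rendering of the paper's one-line observation that $00100100$ is the least $3$-free word of length $8$ but cannot be extended.
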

\begin{proof}
The least binary $3$-free word of length 8 is $00100100$. However, $00100100$ cannot be extended on the right to a binary $3$-free word. It follows that  $001001010\le{\boldsymbol m}.$ If 
${\boldsymbol m}\le {\boldsymbol \ell}$, then
$$f(01g(0))0=001001010\le {\boldsymbol m}\le {\boldsymbol \ell}=f(01g(h^\omega(0))),$$ forcing ${\boldsymbol m}$ to have prefix $001001010$. Then $(001)^{-1}{\boldsymbol m}$ is an infinite good word with prefix $001010$. By Lemma~\ref{001010}, this forces
$$f(g(h^\omega(0)))\le (001)^{-1}{\boldsymbol m}$$
forcing
$${\boldsymbol \ell}=001f(g(h^\omega(0)))\le {\boldsymbol m}.$$
\end{proof}
\begin{lemma}\label{recurrent} Word $01f(g(h^\omega(0)))$ is recurrent.
\end{lemma}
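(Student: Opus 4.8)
The plan is to factor the problem through the word ${\boldsymbol v}=f(g(h^\omega(\tt0)))$ of Theorem~\ref{good}, whose recurrence is easy, and to show that prepending $01$ introduces no genuinely new factors. Concretely, I will prove two things: that ${\boldsymbol v}$ is recurrent, and that every prefix of $01{\boldsymbol v}$ is already a factor of ${\boldsymbol v}$. These combine immediately: any factor of $01{\boldsymbol v}$ is a factor of some prefix of $01{\boldsymbol v}$, hence a factor of ${\boldsymbol v}$, and since ${\boldsymbol v}$ is recurrent each such factor occurs infinitely often in ${\boldsymbol v}$ and therefore infinitely often in $01{\boldsymbol v}$.

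For recurrence of ${\boldsymbol v}$, I would first observe that $h$ is a primitive morphism: the directed graph on $\{\tt0,\tt1,\tt2\}$ with an edge $a\to b$ whenever $b$ occurs in $h(a)$ is strongly connected and has a loop (equivalently, a power of its incidence matrix is strictly positive). Hence $h^\omega(\tt0)$ is uniformly recurrent, and since recurrence is preserved by non-erasing morphisms, ${\boldsymbol v}=f(g(h^\omega(\tt0)))$ is recurrent.

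The heart of the matter is showing that every prefix of $01{\boldsymbol v}$ is a factor of ${\boldsymbol v}$, and here two elementary observations do the work. First, $01=f(\tt1)$, so $01{\boldsymbol v}=f\!\left(\tt1\,g(h^\omega(\tt0))\right)$, and every prefix of $01{\boldsymbol v}$ is a prefix of $f\!\left(\tt1\,g(h^\omega(\tt0)[1..j])\right)$ once $j$ is large. Second, each of $g(\tt0),g(\tt1),g(\tt2)$ ends in the letter $\tt1$. Now fix a prefix $P=h^\omega(\tt0)[1..j]$; by recurrence of $h^\omega(\tt0)$, the factor $P$ also occurs at some position $>1$, so it is immediately preceded by some letter $c$. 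Since $g(c)$ ends in $\tt1$, the word $\tt1\,g(P)$ is a factor of $g(h^\omega(\tt0))$, and applying $f$ shows that $f(\tt1\,g(P))=01\,f(g(P))$, a prefix of $01{\boldsymbol v}$ of unbounded length, is a factor of ${\boldsymbol v}$. As $j\to\infty$ this covers every prefix of $01{\boldsymbol v}$, completing the reduction.

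The step I expect to require the most care is the boundary bookkeeping in this last paragraph: because $f$ is non-uniform, an arbitrary prefix of $01{\boldsymbol v}$ need not coincide with an $f$-block, so it must be padded up to a full image $f(\tt1\,g(h^\omega(\tt0)[1..j]))$, and one must check that the ``extra'' leading $\tt1$ really is supplied by the terminal letter of the preceding $g$-image rather than having to be found inside it. Verifying the single structural fact that every image under $g$ ends in $\tt1$ is what makes this clean; the remaining ingredients (primitivity of $h$, and preservation of recurrence under non-erasing morphisms) are standard and need only be cited or checked in a line.
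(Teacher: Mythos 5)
Your proof is correct and follows essentially the same route as the paper: both arguments rest on the identity $01\,f(g(h^\omega(\tt0)))=f(\tt1\,g(h^\omega(\tt0)))$ together with the observation that the letter $\tt1$ always precedes the relevant occurrence (you phrase this as ``every $g$-image ends in $\tt1$,'' the paper as ``the only letter preceding a $\tt0$ in $g(h^\omega(\tt0))$ is $\tt1$''), combined with recurrence of the underlying fixed point. The only difference is cosmetic: you work one level deeper (locating a second occurrence of a prefix of $h^\omega(\tt0)$ rather than of $g(h^\omega(\tt0))$) and you supply the justification of recurrence via primitivity of $h$, which the paper simply asserts.
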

\begin{proof} Word $g(h^\omega(0))$ is recurrent. However, the only letter preceding a $0$ in $g(h^\omega(0))$ is $1$, so that if $p$ is a prefix of $g(h^\omega(0))$, word $1p$ must be a (necessarily recurrent) factor of $g(h^\omega(0))$. Any $u$ factor of $01f(g(h^\omega(0)))$ is a factor of $01f(p)=f(1p)$ for some prefix $p$ of $g(h^\omega(0))$. Since $1p$ is recurrent in $g(h^\omega(0))$, $f(1p)$ is recurrent in $f(g(h^\omega(0)))$, and so is $u$. Then $u$ is recurrent in $01f(g(h^\omega(0)))$. we conclude that $01f(g(h^\omega(0)))$ is recurrent.
\end{proof}

\begin{lemma} The word
$${\boldsymbol \ell}=f(01g(h^\omega(0)))$$  is $14/5$-free.
\end{lemma}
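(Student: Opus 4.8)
The plan is to exploit the identity
$$\boldsymbol{\ell}=f(01\,g(h^\omega(\tt0)))=f(\tt0)f(\tt1)f(g(h^\omega(\tt0)))=001\boldsymbol{v},$$
where $\boldsymbol{v}=f(g(h^\omega(\tt0)))$ is $14/5$-free by Theorem~\ref{good}. Equivalently $\boldsymbol{\ell}=\tt0\cdot(01\boldsymbol{v})$, so $\boldsymbol{\ell}$ is obtained from $01\boldsymbol{v}$ by prepending a single letter. The first reduction is to show that $01\boldsymbol{v}$ is itself $14/5$-free. By Lemma~\ref{recurrent} it is recurrent, and since $\boldsymbol{v}$ occupies positions $\ge 3$ of $01\boldsymbol{v}$, every finite factor of $01\boldsymbol{v}$ lies inside a prefix which, by recurrence, reappears at some position $\ge 3$ and is therefore a factor of $\boldsymbol{v}$; the $14/5$-freeness of $\boldsymbol{v}$ then passes to $01\boldsymbol{v}$.

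With this in hand, any $14/5$-power occurring in $\boldsymbol{\ell}=001\boldsymbol{v}$ that avoids the initial letter would be a factor of $01\boldsymbol{v}$, hence impossible. So it suffices to rule out a $14/5$-power that is a \emph{prefix} $\boldsymbol{\ell}[1\ldots n]$, say of period $p$, so that $n\ge\tfrac{14}{5}p$. The key move is that periodicity gives $\boldsymbol{\ell}[1\ldots n-p]=\boldsymbol{\ell}[1+p\ldots n]$, and the right-hand side occupies positions $\ge 2$; thus it is a factor of $01\boldsymbol{v}$, and therefore (by the previous paragraph) a factor of $\boldsymbol{v}$. In other words, the prefix $\boldsymbol{\ell}[1\ldots n-p]$ of $\boldsymbol{\ell}$ is forced to occur inside $\boldsymbol{v}$.

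I would then close the argument with a single forbidden-factor fact: $00100$ is not a factor of $\boldsymbol{v}$. Indeed every $\tt0$ in $\boldsymbol{v}=f(g(h^\omega(\tt0)))$ begins an $f$-block, so an occurrence of $00100$ is block-aligned and its three complete blocks read $f(\tt0)f(\tt1)f(\tt0)$, forcing the ternary preimage to contain the factor $\tt{010}$ (only the last, incomplete, block is left undetermined by the synchronization delay of $f$, and it is irrelevant). But $\tt{010}$ never occurs in $g(h^\omega(\tt0))$: each $g$-block begins $\tt{01}$ and, being of length at least three, has third letter $\tt1$ or $\tt2$, while every $\tt0$ is a block start, so $\tt{010}$ cannot arise. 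Now if $p\ge 3$ then $n-p\ge\tfrac{9}{5}p\ge\tfrac{27}{5}>5$, so $\boldsymbol{\ell}[1\ldots n-p]$ begins with $\boldsymbol{\ell}[1\ldots 5]=00100$, contradicting that it is a factor of $\boldsymbol{v}$. The residual cases $p\in\{1,2\}$ I would dispatch by hand: $\boldsymbol{\ell}$ contains neither $000$ nor $111$ (the former needs the absent ternary factor $\tt{00}$, the latter a $\tt1$-run the blocks cannot produce), and $\boldsymbol{\ell}[1\ldots 3]=001$ already destroys period $2$, so no prefix of period $1$ or $2$ can reach exponent $14/5$.

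The genuine content — and the step I expect to require the most care — is the interplay of the last two paragraphs: transferring $14/5$-freeness from $\boldsymbol{v}$ to $01\boldsymbol{v}$ through recurrence, and then recognizing that the ``new'' prefix $\boldsymbol{\ell}[1\ldots n-p]$ manufactured by periodicity is pushed back into $\boldsymbol{v}$ and so must avoid $00100$. The one delicate point is verifying rigorously that $\tt{010}$ (equivalently $00100$) is absent, correctly accounting for the synchronization of $f$, since a generic factor does not determine its final $f$-block; here the forbidden pattern already appears among the complete blocks, which is what makes the obstruction robust.
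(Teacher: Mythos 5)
Your proposal is correct and follows essentially the same route as the paper: transfer $14/5$-freeness to $01\boldsymbol{v}$ via recurrence so that any offending power must be a prefix of $\boldsymbol{\ell}$, then use the fact that $00100$ is a prefix of $\boldsymbol{\ell}$ but not a factor of $\boldsymbol{v}$ to bound the period, and finish with a short finite check. The only difference is that you spell out details the paper leaves implicit, namely the synchronization argument showing $00100$ (equivalently the ternary factor $010$) cannot occur in $\boldsymbol{v}$, and the explicit disposal of periods $1$ and $2$.
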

\begin{proof} Currie et.\ al.\cite{currie20} proved that $f(g(h^\omega(0)))$  is $14/5$-free. By Lemma~\ref{recurrent}, $01f(g(h^\omega(0)))$ has the same factors as $f(g(h^\omega(0)))$  and is also $14/5$-free. Therefore, any $14/5^+$ power in $f(g(h^\omega(0)))=001f(g(h^\omega(0)))$ must be a prefix. The word $00100$ is a prefix of $f(01g(h^\omega(0)))$, but does not occur in
$f(g(h^\omega(0)))$. It follows that any $14/5^+$ power which is a prefix of $f(g(h^\omega(0)))$ has period 4 or less. A very short finite check shows no such $14/5^+$ power is a prefix of $f(01g(h^\omega(0)))$.

\end{proof}
\begin{lemma} The word
$${\boldsymbol \ell}=f(01g(h^\omega(0)))$$  is rich.
\end{lemma}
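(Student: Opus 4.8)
The plan is to invoke the characterization of richness by complete return words (Theorem~2.14 of \cite{glen09}, quoted above): an infinite word is rich precisely when every complete return word to a palindrome occurring in it is itself a palindrome. Write ${\boldsymbol \ell}=001{\boldsymbol v}$, where ${\boldsymbol v}=f(g(h^\omega(0)))$. I would first reduce to return words that meet the initial letter. By the argument of Lemma~\ref{recurrent}, every factor of $01{\boldsymbol v}$ is already a factor of ${\boldsymbol v}$; hence every factor of ${\boldsymbol \ell}$ that does not begin at position $1$ is a factor of ${\boldsymbol v}$. Since ${\boldsymbol v}$ is good and therefore rich (Theorem~\ref{good}), every complete return word of ${\boldsymbol \ell}$ not beginning at position $1$ is a palindrome. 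What remains are the complete return words that begin at position $1$; these are exactly the prefixes of ${\boldsymbol \ell}$ that are complete return words to a palindromic prefix of ${\boldsymbol \ell}$.

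Next I would determine all palindromic prefixes of ${\boldsymbol \ell}$. The pivotal fact is that $00100$ occurs in ${\boldsymbol \ell}$ only as the prefix ${\boldsymbol \ell}[1..5]$: it is a prefix, it does not occur in ${\boldsymbol v}$ (as already noted in the preceding lemma), and every factor of ${\boldsymbol \ell}$ not starting at position $1$ lies in ${\boldsymbol v}$. Consequently $100100$ cannot occur anywhere in ${\boldsymbol \ell}$, since it contains $00100$ beginning at its second letter, which would force a second occurrence of $00100$. Now any palindromic prefix $q$ of length $n\ge 6$ must end in $(001001)^R=100100$, because its last six letters mirror ${\boldsymbol \ell}[1..6]=001001$; this is impossible. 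Inspecting the prefixes of length at most $5$ then shows the palindromic prefixes of ${\boldsymbol \ell}$ are exactly $0$, $00$, and $00100$.

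Finally I would check the return condition for these three. The first return of $0$ yields the complete return word $00$; the first return of $00$ yields $00100$ (the next occurrence of $00$ being at positions $4$--$5$); and $00100$ never recurs, so it contributes no complete return word. Since $00$ and $00100$ are palindromes, every complete return word beginning at position $1$ is a palindrome. Combined with the first step, every complete return word of ${\boldsymbol \ell}$ is a palindrome, and ${\boldsymbol \ell}$ is rich by Theorem~2.14 of \cite{glen09}.

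I expect the main obstacle to be the second step: controlling the palindromic prefixes of ${\boldsymbol \ell}$ rather than merely checking finitely many short cases. The entire argument turns on the single new short factor $00100$ that prepending $0$ introduces, and one must argue cleanly that its uniqueness in ${\boldsymbol \ell}$ both forbids the factor $100100$ and thereby rules out \emph{all} long palindromic prefixes in one stroke; getting this mirror-image/uniqueness bookkeeping right is the delicate part, whereas the two remaining return-word verifications are immediate.
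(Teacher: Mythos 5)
Your proof is correct and takes essentially the same route as the paper's: reduce to complete return words occurring as prefixes of ${\boldsymbol \ell}$ via the recurrence lemma, exploit the fact that the palindrome $00100$ occurs in ${\boldsymbol \ell}$ only as a prefix to eliminate long palindromic prefixes, and verify the finitely many remaining cases ($0$, $00$, $00100$) directly. The only cosmetic difference is that the paper bounds the period $|p|\le 4$ immediately from the uniqueness of $00100$, whereas you rule out palindromic prefixes of length at least $6$ via the forbidden factor $100100$ and dispose of $00100$ separately by noting it never recurs.
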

\begin{proof} Currie et.\ al.\cite{currie20} proved that $f(g(h^\omega(0)))$  is rich. By Lemma~\ref{recurrent}, $01f(g(h^\omega(0)))$ has the same factors as $f(g(h^\omega(0)))$  and is also rich. 

Suppose that $001f(g(h^\omega(0)))$ is not rich. It will therefore have a complete return which is not a palindrome. Since $01f(g(h^\omega(0)))$ is rich, some prefix of $001f(g(h^\omega(0)))$ must be a complete return to a palindrome which is not a palindrome. Let this prefix be $pqp$ where $p$ is a palindrome and $q$ is not. The palindrome $00100$ is a prefix of $f(01g(h^\omega(0)))$, but does not occur in
$f(g(h^\omega(0)))$. It follows that $|p|\le 4$. The only possibility is seen to be $p=00$. However the complete return to $00$ is $00100$, which is a palindrome. This is a contradiction.
\end{proof}

\end{document}